\newtheorem{lem}{Lemma}
\newtheorem{theorem}{Theorem}
\newtheorem{rem}{Remark}
\def\mb{\mathbf}
\def\mc{\mathcal}
\newcommand{\ab}[1]{\textcolor{blue}{{#1}}}
\begin{document}
	\title{Accelerated Distributed Allocation}
	
	\author{Mohammadreza Doostmohammadian$^\ast$, Alireza Aghasi$^\dagger$
		\thanks{		
			$^\ast$ Faculty of Mechanical Engineering, Semnan University, Semnan, Iran {\texttt{doost@semnan.ac.ir}}.
			$^\dagger$ Department of Electrical Engineering and Computer Science, Oregon State University, Oregon, USA  {\texttt{alireza.aghasi@oregonstate.edu}}.}}
	\maketitle
	\thispagestyle{empty} 
	\begin{abstract}
		Distributed allocation finds applications in many scenarios including CPU scheduling, distributed energy resource management, and networked coverage control. In this paper, we propose a fast convergent optimization algorithm with a tunable rate using the signum function. The convergence rate of the proposed algorithm can be managed by changing two parameters. We prove convergence over uniformly-connected multi-agent networks. Therefore, the solution converges even if the network loses connectivity at some finite time intervals. The proposed algorithm is all-time feasible, implying that at any termination time of the algorithm, the resource-demand feasibility holds. This is in contrast to asymptotic feasibility in many dual formulation solutions (e.g., ADMM) that meet resource-demand feasibility over time and asymptotically.
	\end{abstract}
	\begin{IEEEkeywords}
		Distributed allocation, scheduling, graph theory, optimization
	\end{IEEEkeywords}
	
	\section{Introduction} \label{sec_intro}
	\IEEEPARstart{A}{llocation} and scheduling find applications in  CPU scheduling  \cite{ojsys,ccta2,grammenos2023cpu}, energy resource allocation  \cite{cherukuri2015distributed,derm,kar_edp,alizadeh2012demand}, linearly constrained minimum variance (LCMV) beamforming \cite{zhang2019distributed}, coverage control \cite{MiadCons,MSC09,MiadICNSC08}, \ab{vehicle traffic networks \cite{zeng2022multi,zeng2023congestion,qian2017traffic,zhang2023hybrid}}. This paper focuses on faster convergence towards optimal resource allocation.  Accelerated convergence is essential for rapidly adapting to changing conditions and ensuring timely responses to resource allocation requests. 
	Traditional centralized methods often prove inadequate in addressing the intricacies, efficiency, and scale of contemporary distributed large-scale systems. Therefore, the ongoing research has shifted towards decentralized algorithms seamlessly adapting to the ever-changing landscape of these systems \cite{yang2019survey,nedic2018distributed,assran2020advances}. Fast convergence is particularly critical where real-time or near-real-time performance is required, such as cloud computing, edge computing, Internet of Things (IoT) \cite{he2020large}, and in the context of distributed computing environments \cite{grammenos2023cpu}. Recently, fast distributed averaging algorithms are considered in the literature that further motivates this work \cite{xiao2004fast,zhang2020power}.
	
	The existing literature on distributed allocation and scheduling are classified into two main domains: \ab{linear primal-based  \cite{boyd2006optimal,cherukuri2015distributed,shames2011accelerated,cherukuri2016initialization} and dual-based solutions \cite{doan2017ccta,cdc_dtac,banjac2019decentralized,chang2016proximal,jian2019distributed,aybat2019distributed,ding2018primal,jiang2022distributed,shao2022distributed,xu2017distributed}, e.g.,  alternating-direction-method-of-multipliers (ADMM).  The primal-based solutions \cite{boyd2006optimal,shames2011accelerated,cherukuri2016initialization} are in general slower than ADMM-based setups; however, the primal-based solutions are all-time feasible \cite{boyd2006optimal,shames2011accelerated,cherukuri2016initialization}, i.e., the solution holds resource-demand feasibility at all times. The work \cite{shames2011accelerated} improves the convergence rate by adding momentum term as compared to \cite{boyd2006optimal,cherukuri2016initialization}. Dual-based solutions include: Lagrangian-based \cite{doan2017ccta}, dual consensus ADMM \cite{banjac2019decentralized,jian2019distributed,chang2016proximal}, asynchronous ADMM \cite{jiang2022distributed}, primal-dual ADMM-like \cite{aybat2019distributed}, primal-dual laplacian gradient flow \cite{ding2018primal}, proportional-integral control-based \cite{shao2022distributed}, and nonnegative surplus-based algorithm \cite{xu2017distributed}.} In the context of distributed machine-learning, some signum-based solutions \cite{shi2023finite,doan2019finite,zhang2018distributed} and distributed heavy-ball methods are proposed to improve the convergence rate \cite{xin2019distributed}. The all-time feasibility of primal-based solutions implies that at any termination time of the algorithm, the solution meets the resource-demand constraint feasibility, while in dual-based solutions, it takes some time to meet this constraint feasibility. Therefore, the dual-based solutions must be fast enough to satisfy the resource-demand constraint in time, otherwise, it may cause service disruption and breakdown. This is especially of interest in distributed generator coordination and energy resource management \cite{cherukuri2015distributed}. Further, ADMM algorithms suffer from all-time-connectivity requirements, i.e., they need to not lose connectivity at any time to avoid divergence and non-optimal solutions. 
	
	This work proposes a nonlinear Laplacian gradient solution for distributed allocation and scheduling via primal formulation. The proposed solution is all-time feasible, which means that at any iteration the resource-demand constraint holds. This avoids feasibility violation and service disruption. We prove convergence over uniformly-connected networks. This is important as in real \textit{mobile} multi-agent systems the agents may come into or leave the communication range of one another, which may cause dynamic networks with uniform-connectivity instead of all-time connectivity. The other reason is the possibility of packet drops and temporary link failure over the network. The main feature of our algorithm is its accelerated convergence, which is tunable by changing the associated parameters. The proposed nonlinear continuous-time dynamics improves the convergence rate of the existing primal-based solutions (over uniformly-connected networks) by two parameters associated with signum-based functions. As compared to most existing literature, this work (i) relaxes the all-time connectivity requirement in most literature \cite{boyd2006optimal,cherukuri2015distributed,shames2011accelerated,cherukuri2016initialization,doan2017ccta,cdc_dtac,banjac2019decentralized,jian2019distributed,aybat2019distributed,ding2018primal,jiang2022distributed,shao2022distributed,chang2016proximal,xu2017distributed} to uniform-connectivity, (ii) leads to an all-time feasible solution in contrast to asymptotic feasibility in dual-based solutions \cite{doan2017ccta,cdc_dtac,banjac2019decentralized,jian2019distributed,aybat2019distributed,ding2018primal,jiang2022distributed,shao2022distributed,chang2016proximal,xu2017distributed}, and (iii) improves the convergence rate of the existing all-time feasible primal-based solutions \cite{boyd2006optimal,cherukuri2015distributed,shames2011accelerated,cherukuri2016initialization}. More importantly, the points (i)-(iii) are addressed altogether and, to our best knowledge, no existing work in the literature addresses these. 


\section{Problem Formulation and Preliminaries} \label{sec_pre}
    \textbf{ Notation:} $\mb{1}_n$ denotes all ones vector of size $n$. $\nabla F({\mb{x}})$ denotes the gradient of function $F(\cdot)$ with respect to state parameter $\mb{x}$. $\partial \overline{F}(\mb{x})$ denotes the \textit{generalized derivative} of the nonsmooth function $\overline{F}(\mb{x})$. $\mc{L}_\mc{H} F : \mathbb{R}^n  \rightrightarrows  \mathbb{R}$ denotes the \textit{set-valued Lie-derivative}  of function $F$ with respect to the dynamics $\dot{\mb{x}} \in \partial \mc{H}(\mb{x})$. See detailed definitions in \cite{cortes2008discontinuous}.

\textbf{ Problem:} 
\ab{Consider a network of $n$ agents to be allocated with a share of overall resources $x_i$ equal to $b$. These resources are associated with a cost $f_i(x_i)$. The sum of overall resources is constant equal to $b$ and the allocation needs to optimize the overall cost. 
This problem is modelled as the following constrained optimization problem \cite{cherukuri2016initialization}:}
\begin{align} \label{eq_dra}
	\mc{P}_1:~~	\min_\mb{x}
	~~ & F_0(\mb{x}) = \sum_{i=1}^{n} f_i(x_i),~
	\text{s.t.} ~  \mb{1}_n^\top \mb{x}  - b = 0,
\end{align}
where $x_i$ is the state of the agent $i$ representing the amount of resources allocated to the agent $i$ and $\mb{x}:=[x_1;\dots;x_n]$.  The global state vector $\mb{x}$ denotes the column vector of all allocated states and $n$ is the number of agents.  The \textit{feasibility constraint} $\mb{1}_n^\top \mb{x}  - b = 0$ (or $\mb{1}_n^\top \mb{x}  = b$) implies that the sum of resources to be allocated is fixed and constrained with the value $b$. The functions $f_i(x_i)$ represent the cost of allocating resources to be minimized. These functions are strictly convex.
The agenr/node states might be locally constrained with some box constraints $
	m_i \leq x_i \leq M_i,
$
implying that the amount of allocated resource to node $i$ is upper-bounded by $M_i$ and lower-bounded by $m_i$. These local constraints can be added as penalizing convex terms to the cost function referred to as penalty functions or barrier functions \cite{wu2021new,bertsekas2003convex}. \ab{Then, considering the penalty term as $\widetilde{F}(\mb{x})$ the objective function changes into the form $F(\mb{x}) := F_0(\mb{x}) +\widetilde{F}(\mb{x})$. One example penalty function is \cite{slp_book,nesterov1998introductory}}
\begin{align}
	\sigma [{x}_i - M_i]^+ + \sigma [m_i - {x}_i]^+,
\end{align}
with $[u]^+ = \max \{u,0\}^c$, $c \in \mathbb{Z}^+$ \cite{bertsekas2003convex}. Note that this function is smooth for $c\geq2$, and for the non-smooth case of $c=1$ one can use the following smooth \ab{approximation \cite{slp_book,nesterov1998introductory}}:
\begin{align} \label{eq_log_penalty}
	L(u,\rho)=\frac{\sigma}{\rho} \log (1+\exp (\rho u)),
\end{align}
where $\sigma$ weights the box constraint penalty terms as compared to the main objective function $f_i(x_i)$. In other words, by setting $\sigma>1$ the solution is more toward satisfying the box constraint, while $\sigma<1$ puts less weight to satisfy the box constraint and thus more toward optimizing the cost/objective function.
It is proved that by choosing~$\rho$ large enough~${L(u,\rho)}$ gets arbitrarily close to~$\max\{u,0\}$ and the maximum gap between the two functions inversely scales with $\rho$ \cite{slp_book,nesterov1998introductory}. Similarly, other smooth and convex barrier functions can be found in \cite{wu2021new,bertsekas2003convex}. Note that the sum of strictly convex local cost functions and convex penalty terms is strictly convex. Further, the penalty functions are generally non-quadratic which implies that adding them to the objective function makes the optimization problem non-quadratic. This makes the problem different from and more challenging than the existing consensus-based quadratic problems, e.g., \cite{kar_edp}. It should be mentioned that using barrier or penalty terms is an approximate solution to address the local box constraints.

\begin{lem} \label{lem_equilibria}
	\cite{boyd2006optimal}
 	The strictly convex constrained optimization problem \eqref{eq_dra} has a unique optimal solution ${\mb{x}^*}$ for which the objective gradient satisfies
 	$ \nabla F({\mb{x}^*}) \in \mbox{span}(\mb{1}_n)$,
 	where $\nabla F({\mb{x}^*}) := (\frac{df_1}{dx_1}(x_1^*),\dots,\frac{df_n}{dx_n}(x_n^*))^\top$.
 \end{lem}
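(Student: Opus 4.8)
The plan is to split the statement into its two assertions: the uniqueness of the minimizer, which follows from convexity, and the characterization $\nabla F(\mb{x}^*) \in \mbox{span}(\mb{1}_n)$, which follows from first-order optimality. First I would record the structural facts: the objective $F(\mb{x}) = \sum_{i=1}^n f_i(x_i)$ is separable and strictly convex (a sum of strictly convex $f_i$, together with the convex penalty terms), and the feasible set $\{\mb{x} : \mb{1}_n^\top \mb{x} = b\}$ is a nonempty affine hyperplane, hence convex. A strictly convex function admits at most one minimizer over a convex set, so once existence is granted the minimizer $\mb{x}^*$ is unique. Existence of a minimizer over this (unbounded) hyperplane is not automatic from strict convexity alone; I would secure it by invoking coercivity of $F$ on the feasible affine subspace—which holds for the standard dispatch costs and, in particular, is enforced by the barrier/penalty terms—so that the sublevel sets intersected with the constraint set are compact and the infimum is attained.

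For the gradient condition I would appeal to Lagrangian first-order optimality. Since the single equality constraint is affine, a constraint qualification holds trivially, so the Karush--Kuhn--Tucker conditions are necessary at the optimum and, by convexity of $F$, also sufficient. Forming $\mc{L}(\mb{x},\lambda) = F(\mb{x}) + \lambda(\mb{1}_n^\top \mb{x} - b)$, stationarity in $\mb{x}$ at $\mb{x}^*$ reads
\begin{align}
	\nabla F(\mb{x}^*) + \lambda^* \mb{1}_n = \mb{0},
\end{align}
for some multiplier $\lambda^* \in \mbb{R}$, which is precisely $\nabla F(\mb{x}^*) = -\lambda^* \mb{1}_n \in \mbox{span}(\mb{1}_n)$. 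By separability this is the equal-marginal-cost condition $\frac{df_1}{dx_1}(x_1^*) = \dots = \frac{df_n}{dx_n}(x_n^*)$.

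I expect the existence/attainment step to be the only genuine obstacle; the uniqueness and the stationarity characterization are routine once convexity and the affineness of the constraint are in hand. A secondary point requiring care is smoothness: when the penalty exponent is $c=1$ the term is nonsmooth, and the stationarity condition must then be stated with the generalized gradient, $\mb{0} \in \partial \overline{F}(\mb{x}^*) + \lambda^* \mb{1}_n$; using the smooth log-sum-exp surrogate of \eqref{eq_log_penalty} removes this issue and recovers the gradient statement exactly as written.
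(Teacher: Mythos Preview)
Your proposal is correct and follows the same approach as the paper: the paper's proof is a one-line invocation of the KKT conditions, noting that the gradient of the linear constraint $\mb{1}_n^\top \mb{x} - b = 0$ is $\mb{1}_n$, so stationarity forces $\nabla F(\mb{x}^*)$ to be a scalar multiple of $\mb{1}_n$. Your write-up is a more careful unpacking of exactly this argument, with the added (and welcome) attention to existence via coercivity and to the nonsmooth case, neither of which the paper addresses.
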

\begin{proof}
	The proof follows the so-called Karush-Kuhn-Tucker (KKT) conditions, \ab{i.e., the optimizer is a factor of $\mb{1}_n$ as the gradient of the linear constraint $\mb{1}_n^\top \mb{x}  - b = 0$ \cite{bertsekas2003convex}.}
\end{proof}

\section{The Proposed Accelerated Algorithm} \label{sec_alg}
For the distributed setup we consider an undirected network $\mc{G}$ of $n$ agents communicating optimization data over weighted links. The adjacency matrix of this network is denoted by $W$ and is symmetric. We assume the network is uniformly connected, i.e., its union over finite time intervals is strongly connected while it is not necessarily connected at every time instant.
Then, our proposed accelerated continuous-time dynamics is as follows:

\small
\begin{align}
	\dot{{x}}_i =  -  \eta \sum_{j \in \mc{N}_i} W_{ij}\Bigl( \mbox{sgn}^{\alpha}\Bigl(\frac{d f_i}{d {x}_i}-\frac{d f_j}{d {x}_j}\Bigr) 
	+  \mbox{sgn}^{\beta}\Bigl(\frac{d f_i}{d {x}_i}-\frac{d f_j}{d {x}_j}\Bigr)\Bigr)
	\label{eq_solution0}
\end{align}\normalsize
with $0<\alpha<1$ and $\beta>1$ as rate tuning parameters, $W_{ij}=W_{ji}\geq 0$ as symmetric link weight factors, $\eta$ as the positive step-rate, $\mc{N}_i$ as the set of neighbors of $i$ over the multi-agent network $\mc{G}$, and signum (or sign) function defined as \cite{shi2023finite} 
\begin{align}
	\mbox{sgn}^{\alpha}(u)=u|u|^{\alpha-1}
\end{align}
and $\mbox{sgn}^{\beta}(\cdot)$ similarly follows. Based on the definition of $\mbox{sgn}^{\alpha}(\cdot)$ and $\mbox{sgn}^{\beta}(\cdot)$ one can rewrite the dynamics~\eqref{eq_solution0} as,
\begin{align} \nonumber
	\dot{{x}}_i = \left\{ \begin{array}{ll}
		-\eta \sum_{j \in \mc{N}_i} W_{ij}\Bigl(\mbox{sgn}^{\alpha}(u) 
		+ \mbox{sgn}^{\beta}(u) \Bigr), & \text{if}~  u\geq 0,\\
		\eta \sum_{j \in \mc{N}_i} W_{ij}\Bigl( \mbox{sgn}^{\alpha}(u) 
		+  \mbox{sgn}^{\beta}(u) \Bigr), & \text{if}~  u< 0.
	\end{array}\right.
\end{align}
with $u=\frac{d f_i}{d {x}_i}-\frac{d f_j}{d {x}_j}$. The proposed gradient tracking equation in discrete time is in the following form:

\small
\begin{align} \nonumber
	{x}_i(k+1) = {x}_i(k) - \eta \sum_{j \in \mc{N}_i} &W_{ij}\Bigl( \mbox{sgn}^{\alpha}(\frac{d f_i(x_i(k))}{d x_i}-\frac{d f_j(x_j(k))}{d x_j}) 
	\\ 
	&+  \mbox{sgn}^{\beta}(\frac{d f_i(x_i(k))}{d x_i}-\frac{d f_j(x_j(k))}{d x_j})\Bigr)
	\label{eq_sol_d}
\end{align}\normalsize
with $k$ as the discrete-time index. \ab{For gradient descent in \textit{discrete-time}, it is known that the step size $\eta$ should satisfy $0 < \eta < 1/L$ with $L$ as the Lipschitz constant of $F(\cdot)$. } Note that, following Lemma~\ref{lem_equilibria}, to reach the optimizer $\mb{x}^*$ the gradient $\frac{d f_i}{d {x}_i}$ at all agents $i$ need to reach consensus. Therefore, the proposed dynamics drive the states at all agents such that the difference of their gradients converges to zero, and at the optimal point we have $ \nabla F({\mb{x}^*}) \in \mbox{span}(\mb{1}_n)$.
\begin{rem}
	Signum functions $\mbox{sgn}^{\alpha}(\cdot)$ and $\mbox{sgn}^{\beta}(\cdot)$ accelerate the convergence rate of the dynamics \eqref{eq_solution0} and \eqref{eq_sol_d} toward the optimal point. The parameters $\alpha$ and $\beta$ tune the rate of convergence. By decreasing parameter $\alpha$ and increasing parameter $\beta$ both dynamics converge faster. \ab{However, this may cause higher residual and chattering as discussed later. So, there is a trade-off between the convergence rate and steady-state residual.} In fact, the term $\mbox{sgn}^{\beta}(\cdot)$ improves the convergence rate in the regions far from the optimizer $\mb{x}^*$ and, on the other hand, $\mbox{sgn}^{\alpha}(\cdot)$ improves the convergence rate in the close vicinity of the optimizer $\mb{x}^*$. This is because we have $\mbox{sgn}^{\alpha}(\frac{d f_i}{d {x}_i}-\frac{d f_j}{d {x}_j})>\frac{d f_i}{d {x}_i}-\frac{d f_j}{d {x}_j}$ for $|\frac{d f_i}{d {x}_i}-\frac{d f_j}{d {x}_j}|<1$ and this implies that in these regions (closer to the optimal equilibrium) the convergence is faster. Similarly, we have $\mbox{sgn}^{\beta}(\frac{d f_i}{d {x}_i}-\frac{d f_j}{d {x}_j})>\frac{d f_i}{d {x}_i}-\frac{d f_j}{d {x}_j}$ for $|\frac{d f_i}{d {x}_i}-\frac{d f_j}{d {x}_j}|>1$ and this makes the convergence faster in regions far from the equilibrium in Lemma~\ref{lem_equilibria}.  Therefore, the convergence rate is faster than the linear case; however, it is not uniform in all regions due to the nonlinearity (this is also shown by simulation). In the linear case \cite{cherukuri2015distributed} with $\alpha=1$ and $\beta=1$, the convergence rate is $\mc{O}(\frac{1}{k})$.
\end{rem}

In the rest of the paper, we prove feasibility and convergence for continuous-time dynamics~\eqref{eq_solution0} and equivalent results hold for the discrete-time dynamics~\eqref{eq_sol_d}.  
The following lemma proves the all-time feasibility of the proposed solution.
 
\begin{lem} \label{lem_feasible}
	Initializing with a feasible state satisfying the constraint $\mb{1}_n^\top \mb{x}(0)  - b = 0$, the dynamics~\eqref{eq_sol_d} remains all-time feasible for $t>0$.
\end{lem}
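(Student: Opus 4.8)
The plan is to prove feasibility directly for the discrete-time dynamics \eqref{eq_sol_d} named in the lemma, by establishing that the aggregate resource $\mb{1}_n^\top \mb{x}(k)$ is conserved across every iteration, and then closing the argument by induction from the feasible initialization. First I would form the one-step increment of the aggregate and substitute the update \eqref{eq_sol_d} agent-by-agent. Writing $g_i := \frac{d f_i(x_i(k))}{d x_i}$ for brevity, this gives
\[
\mb{1}_n^\top \mb{x}(k+1) - \mb{1}_n^\top \mb{x}(k) = \sum_{i=1}^n \bigl( x_i(k+1) - x_i(k) \bigr) = -\eta \sum_{i=1}^n \sum_{j \in \mc{N}_i} W_{ij}\Bigl( \mbox{sgn}^{\alpha}(g_i - g_j) + \mbox{sgn}^{\beta}(g_i - g_j) \Bigr).
\]

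The key step is to exploit two structural facts to show this double sum vanishes. First, both sign functions are odd: from $\mbox{sgn}^{\alpha}(u) = u|u|^{\alpha-1}$ one has $\mbox{sgn}^{\alpha}(-u) = -\mbox{sgn}^{\alpha}(u)$, and likewise for $\mbox{sgn}^{\beta}$. Second, the network $\mc{G}$ is undirected with symmetric weights, so $j \in \mc{N}_i \iff i \in \mc{N}_j$ and $W_{ij} = W_{ji}$. I would then reindex the double sum as a sum over ordered neighbor pairs and group each term indexed by $(i,j)$ with the term indexed by its reverse $(j,i)$. Using $W_{ij}=W_{ji}$ together with the oddness of both sign functions, the $(j,i)$ contribution equals the negative of the $(i,j)$ contribution, so every such pair cancels exactly and the whole double sum is zero. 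Hence $\mb{1}_n^\top \mb{x}(k+1) = \mb{1}_n^\top \mb{x}(k)$ for every $k$.

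Finally, starting from the feasible initialization $\mb{1}_n^\top \mb{x}(0) = b$ and applying this one-step invariance inductively over $k$, I obtain $\mb{1}_n^\top \mb{x}(k) = b$ for all iterations, which is precisely the asserted all-time feasibility. I do not expect a genuine obstacle: the argument is a direct cancellation and the only subtlety is making the edge-pairing rigorous, i.e. justifying that the reindexing over ordered neighbor pairs is legitimate and noting that no diagonal term arises since $i \notin \mc{N}_i$. The identical antisymmetry yields $\frac{d}{dt}\,\mb{1}_n^\top \mb{x} = 0$ for the continuous-time dynamics \eqref{eq_solution0}, so the same conservation transfers verbatim to that setting, consistent with the paper's statement that equivalent results hold in both cases.
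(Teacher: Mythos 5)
Your proof is correct and follows essentially the same argument as the paper: oddness of $\mbox{sgn}^{\alpha}(\cdot)$ and $\mbox{sgn}^{\beta}(\cdot)$ combined with the weight symmetry $W_{ij}=W_{ji}$ forces pairwise cancellation of the double sum, so the aggregate $\mb{1}_n^\top \mb{x}$ is conserved and feasibility propagates from the initialization. If anything, your discrete-time induction matches the lemma's literal reference to \eqref{eq_sol_d} more faithfully than the paper's own proof, which (despite citing \eqref{eq_sol_d}) carries out the computation on the continuous-time derivative $\mb{1}_n^\top \dot{\mb{x}}$; your closing remark shows the two settings are handled by the identical cancellation.
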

\begin{proof}
	For any $t>0$ under dynamics~\eqref{eq_sol_d} the feasibility constraint satisfies the following,
\begin{align} \nonumber
	\mb{1}_n^\top \dot{\mb{x}} =  \sum_{i =1}^n \dot{{x}}_i =& - \eta \sum_{i =1}^n \sum_{j \in \mc{N}_i} W_{ij}\Bigl( \mbox{sgn}^{\alpha}(\frac{d f_i}{d x_i}-\frac{d f_j}{d x_j}) 
	\\ \label{eq_feas_prov}
	&+  \mbox{sgn}^{\beta}(\frac{d f_i}{d x_i}-\frac{d f_j}{d x_j})\Bigr).
\end{align}
	Note that the signum function is odd and sign-preserving. Therefore, $\mbox{sgn}^{\alpha}(-u) = - \mbox{sgn}^{\alpha}(u)$ and $\mbox{sgn}^{\beta}(-u) = - \mbox{sgn}^{\beta}(u)$ with $u=\frac{d f_i}{d {x}_i}-\frac{d f_j}{d {x}_j}$. Also, the link weights are symmetric and we have $W_{ij}=W_{ji}$. Therefore, $W_{ij}\mbox{sgn}^{\alpha}(u)+W_{ji}\mbox{sgn}^{\alpha}(-u)=0$ and the summation in the right-hand-side of Eq.~\eqref{eq_feas_prov} is zero. Then, we have
	$\mb{1}_n^\top\dot{\mb{x}} = 0$. Since the solution is initially feasible, this proves all-time feasibility.	
\end{proof}


\begin{lem} \label{lem_tree}
	If the multi-agent network $\mc{G}(t)$ is uniformly connected, the optimal solution $\mb{x}^*$ given by Lemma~\ref{lem_equilibria} is the unique equilibrium of \eqref{eq_sol_d}.
\end{lem}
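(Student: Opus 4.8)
The plan is to split the statement into two claims — that $\mb{x}^*$ is an equilibrium and that it is the only one compatible with feasibility — and to isolate first the structural property of the nonlinearity that powers both. Writing $\phi(u):=\mbox{sgn}^{\alpha}(u)+\mbox{sgn}^{\beta}(u)=\text{sign}(u)\bigl(|u|^{\alpha}+|u|^{\beta}\bigr)$ and $g_i:=\frac{df_i}{dx_i}$ for the $i$-th gradient component at the current state, the right-hand side of \eqref{eq_sol_d} reads $\dot{x}_i=-\eta\sum_{j\in\mc{N}_i}W_{ij}\phi(g_i-g_j)$. I would record that $\phi$ is continuous with $\phi(0)=0$ (for $0<\alpha<1$ the value at the origin is read as the limit $\lim_{u\to0}u|u|^{\alpha-1}=0$), odd, and strictly sign-preserving: $\phi(u)=0$ exactly when $u=0$, and otherwise $\text{sign}\,\phi(u)=\text{sign}\,u$.

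For the existence half, I would evaluate the dynamics at $\mb{x}^*$. By Lemma~\ref{lem_equilibria} all component gradients coincide there, so $g_i-g_j=0$ for every pair, every summand $W_{ij}\phi(g_i-g_j)$ vanishes, and hence $\dot{x}_i=0$ for all $i$ irrespective of the instantaneous topology. Thus $\mb{x}^*$ is an equilibrium regardless of connectivity.

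The substance of the lemma is uniqueness, which I would obtain by an extremal cut argument combined with uniform connectivity. Suppose $\bar{\mb{x}}$ is an equilibrium at which the gradients are not all equal, set $M:=\max_i g_i$ (evaluated at $\bar{\mb{x}}$), and let $S:=\{i:g_i=M\}$, a nonempty proper subset of the nodes. For any $i\in S$ and any neighbour $j$ we have $g_i\ge g_j$, so $W_{ij}\phi(g_i-g_j)\ge0$ by sign preservation, whence $\dot{x}_i\le0$ at every instant. Uniform connectivity ensures that over any finite interval the union graph joins $S$ to its complement; hence there exist a time $t_0$ and an active edge $(i_0,j_0)$ with $i_0\in S$, $j_0\notin S$, $W_{i_0j_0}>0$ and $g_{i_0}>g_{j_0}$. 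At $t_0$ that summand is strictly positive while all others are nonnegative, giving $\dot{x}_{i_0}(t_0)<0$, which contradicts the stationarity of an equilibrium. Therefore every equilibrium has all gradients equal.

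To finish, I would combine this with feasibility. Strict convexity of each $f_i$ renders $f_i'$ strictly increasing and hence invertible, so the locus where $g_i=\lambda$ for all $i$ is the one-parameter family $x_i=(f_i')^{-1}(\lambda)$; the constraint $\mb{1}_n^\top\mb{x}=b$, preserved for all $t$ by Lemma~\ref{lem_feasible}, pins $\lambda$ down uniquely because $\sum_i(f_i')^{-1}(\lambda)$ is strictly increasing in $\lambda$. This singles out precisely $\mb{x}^*$. I expect the main obstacle to be the time-varying topology: at a single instant $\mc{G}(t)$ may be disconnected with no edge joining $S$ to its complement, so the strict-positivity step cannot be applied pointwise. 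The resolution is to use that a genuine equilibrium must be stationary for every $t$ and to invoke union-over-an-interval connectivity to locate the one time $t_0$ at which a crossing edge is active, where the contradiction then bites.
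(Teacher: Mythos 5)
Your proof is correct and rests on the same core mechanism as the paper's: a contradiction argument in which uniform connectivity forces, at some time in the interval, an edge across which the gradients differ, so that strict sign-preservation of $\mbox{sgn}^{\alpha}(\cdot)+\mbox{sgn}^{\beta}(\cdot)$ produces a nonzero derivative at the putative equilibrium. The execution differs in a way that favors your version. The paper selects the argmax node $a$ and argmin node $b$, takes a path between them in the union graph, and asserts for some node $a'$ on that path that $\sum_{l\in\mc{N}_{a'}}\mbox{sgn}^{\alpha}(\widehat{\phi}_{a'}-\widehat{\phi}_{l})+\mbox{sgn}^{\beta}(\widehat{\phi}_{a'}-\widehat{\phi}_{l})>0$ merely because one neighbor $l$ satisfies $\widehat{\phi}_{a'}>\widehat{\phi}_{l}$; as written this step is incomplete, since an interior node of the path may also have neighbors with strictly larger gradient values whose negative contributions could cancel the positive one. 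Your extremal-cut formulation --- taking $S=\{i:\,g_i=\max_k g_k\}$ and using uniform connectivity to produce an active edge from $S$ to its complement --- closes exactly this gap: for a maximizing node every summand is nonnegative, and the crossing edge makes one summand strictly positive, so no cancellation is possible. You also supply two pieces the paper leaves implicit: the (easy) verification that $\mb{x}^*$ is indeed an equilibrium, and the observation that ``all gradients equal'' describes a one-parameter family $x_i=(f_i')^{-1}(\lambda)$, so the uniqueness claim only holds once the feasibility hyperplane $\mb{1}_n^\top\mb{x}=b$ (preserved by Lemma~\ref{lem_feasible}) pins down $\lambda$ via strict monotonicity of $\sum_i (f_i')^{-1}(\lambda)$. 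Both additions strengthen the argument and are worth keeping.
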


\begin{proof}
	We prove this by contradiction. Assume another equilibrium $\widehat{\mb{x}}$ for dynamics \eqref{eq_sol_d} for which $ \nabla F({\widehat{\mb{x}}}) \notin \mbox{span}(\mb{1}_n)$. This implies that for this point $\frac{d f_i}{d \mb{x}_i}\neq\frac{d f_j}{d \mb{x}_j}$ for at least one pair of agents $i,j$. Assume $\nabla F({\widehat{\mb{x}}}) = (\widehat{\phi}_1,\dots,\widehat{\phi}_n)^\top$ and find the agents $a = \mbox{argmax}_{\lambda\in \{1,\dots,n\}}  \widehat{\phi}_{\lambda }$ and $b = \mbox{argmin}_{\lambda \in \{1,\dots,n\}}  \widehat{\phi}_\lambda$. From the contradiction we have $\widehat{\phi}_{a}> \widehat{\phi}_{b}$. Uniform connectivity of the network $\mc{G}(t)$ implies that there exists a finite time-interval $T$ for which there exists a path over $\cup_{t}^{t+T} \mc{G}(t)$ from $a$ to $b$. This path includes (at least) two agents $a',b'$ such that  $\widehat{\phi}_{a'}> \widehat{\phi}_{b'}$. Then, in a subdomain of $[t,t+T]$, \ab{there exists at least a neighbouring agent $l$ for which $\widehat{\phi}_{a'}> \widehat{\phi}_{l}$ and from the dynamics \eqref{eq_sol_d} we have  
 $\sum_{l \in \mc{N}_{a'}} \mbox{sgn}^{\alpha}(\widehat{\phi}_{a'} -\widehat{\phi}_{l}) + \mbox{sgn}^{\beta}(\widehat{\phi}_{a'} -\widehat{\phi}_{l})>0$; similarly, there is a neighbouring agent $l$ for which $\widehat{\phi}_{b'}< \widehat{\phi}_{l}$ and  $\sum_{l \in \mc{N}_{b'}} \mbox{sgn}^{\alpha}(\widehat{\phi}_{b'}-\widehat{\phi}_{l})+ \mbox{sgn}^{\beta}(\widehat{\phi}_{b'}-\widehat{\phi}_{l})<0$.} Therefore, $\dot{\widehat{\mb{x}}}_{a'} < 0$ and   $\dot{\widehat{\mb{x}}}_{b'} > 0$. This contradicts the equilibrium assumption for~$\widehat{\mb{x}}$ and proves the lemma.
\end{proof}
Note that, as in consensus algorithms, uniform connectivity is the least requirement for convergence toward optimal value.
\begin{lem} \label{lem_sum}
	Let $W$ be a symmetric adjacency matrix of $\mc{G}$. Then, for $\phi \in \mathbb{R}^n$ the following holds,
	\begin{align} \nonumber
		\sum_{i =1}^n \phi_i\sum_{j =1}^n W_{ij} \mbox{sgn}^{\alpha}(\phi_i &-\phi_j) = \\
		- \frac{1}{2} \sum_{i,j=1}^n W_{ij} &|\phi_i -\phi_j|^{\alpha+1}.
	\end{align}
Similar equation holds for $\mbox{sgn}^{\beta}(\cdot)$.
\end{lem}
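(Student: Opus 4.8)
The plan is to establish this purely algebraic identity by symmetrizing the double sum, using only two structural facts: the symmetry $W_{ij}=W_{ji}$ of the adjacency matrix and the oddness of the power-law sign map, $\mbox{sgn}^{\alpha}(-u)=-\mbox{sgn}^{\alpha}(u)$, already recorded in the proof of Lemma~\ref{lem_feasible}. No convexity, connectivity, or dynamical hypotheses enter; the claim is to hold for every $\phi\in\mathbb{R}^n$ and every symmetric $W$.

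First I would lift the inner sum over $\mc{N}_i$ to a sum over all $j$, which is harmless because $W_{ij}=0$ whenever $j\notin\mc{N}_i$, and write the left-hand side as the ordered double sum $S=\sum_{i,j=1}^n \phi_i\,W_{ij}\,\mbox{sgn}^{\alpha}(\phi_i-\phi_j)$. I would then generate a companion representation of $S$ by relabelling the dummy indices $i\leftrightarrow j$, invoking $W_{ij}=W_{ji}$, and combining it with the original so that the common factor $W_{ij}\,\mbox{sgn}^{\alpha}(\phi_i-\phi_j)$ is multiplied by the difference $(\phi_j-\phi_i)$. Concretely, the target of this bookkeeping is the symmetrized form $S=\tfrac{1}{2}\sum_{i,j=1}^n (\phi_j-\phi_i)\,W_{ij}\,\mbox{sgn}^{\alpha}(\phi_i-\phi_j)$.

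The decisive step is then the pointwise collapse of each summand. Writing $u=\phi_i-\phi_j$ and using the definition $\mbox{sgn}^{\alpha}(u)=u|u|^{\alpha-1}$ together with $u^2=|u|^2$, I obtain $(\phi_j-\phi_i)\,\mbox{sgn}^{\alpha}(\phi_i-\phi_j)=-u\cdot u|u|^{\alpha-1}=-|u|^{\alpha+1}=-|\phi_i-\phi_j|^{\alpha+1}$. Substituting this into the symmetrized sum produces exactly $S=-\tfrac{1}{2}\sum_{i,j=1}^n W_{ij}\,|\phi_i-\phi_j|^{\alpha+1}$, as claimed. The statement for $\mbox{sgn}^{\beta}$ follows word for word: the only properties used are oddness and the power-law form $\mbox{sgn}^{\beta}(u)=u|u|^{\beta-1}$, which give $(\phi_j-\phi_i)\,\mbox{sgn}^{\beta}(\phi_i-\phi_j)=-|\phi_i-\phi_j|^{\beta+1}$ and hence the identical conclusion with $\alpha$ replaced by $\beta$.

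The step I expect to be most error-prone is the sign bookkeeping during the index relabelling: one must apply the oddness of $\mbox{sgn}^{\alpha}$ and the reversal $\phi_i-\phi_j\mapsto\phi_j-\phi_i$ consistently, because a single misplaced sign toggles the identity between the Laplacian-energy form and its negative. Once the paired factor is correctly identified as $(\phi_j-\phi_i)$, the power-law collapse is immediate and the nonnegative quantity $|\phi_i-\phi_j|^{\alpha+1}$ enters with the overall factor $-\tfrac{1}{2}$, matching the stated right-hand side.
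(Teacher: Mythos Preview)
Your approach is exactly the paper's: symmetrize the double sum via the swap $i\leftrightarrow j$, use $W_{ij}=W_{ji}$ and the oddness of $\mbox{sgn}^\alpha$, then collapse $(\phi_i-\phi_j)\,\mbox{sgn}^\alpha(\phi_i-\phi_j)=|\phi_i-\phi_j|^{\alpha+1}$.

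However, you commit precisely the sign slip you caution yourself against. After relabelling and applying oddness, the companion copy is $S=-\sum_{i,j}\phi_j W_{ij}\,\mbox{sgn}^\alpha(\phi_i-\phi_j)$, so averaging with the original yields the prefactor $(\phi_i-\phi_j)$, not $(\phi_j-\phi_i)$; hence $S=+\tfrac{1}{2}\sum_{i,j}W_{ij}\,|\phi_i-\phi_j|^{\alpha+1}$. A two-node sanity check ($W_{12}=W_{21}=1$, $\phi_1=1$, $\phi_2=0$, $\alpha=1$) gives left-hand side equal to $+1$, confirming the positive sign. In fact the lemma as printed carries the same typo---the paper's pairing line starts from $\phi_i W_{ij}\,\mbox{sgn}^\alpha(\phi_j-\phi_i)$ rather than the actual summand $\phi_i W_{ij}\,\mbox{sgn}^\alpha(\phi_i-\phi_j)$---and it is the $+\tfrac{1}{2}$ identity that is really used in Theorem~\ref{thm_converg}, where the overall factor $-\eta$ from the dynamics supplies the minus sign in~\eqref{eq_Frate}.
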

\begin{proof}
	We have $W_{ij} = W_{ji}$, and $\mbox{sgn}^{\alpha}(\cdot)$, $\mbox{sgn}^{\beta}(\cdot)$ as sign-preserving odd functions. Therefore,
	\begin{align} \nonumber
		\phi_i W_{ij} & \mbox{sgn}^{\alpha}(\phi_j -\phi_i) + \phi_j W_{ji}  \mbox{sgn}^{\alpha}(\phi_i-\phi_j)\\ \nonumber
		& = W_{ij}(\phi_i-\phi_j) \mbox{sgn}^{\alpha}(\phi_j-\phi_i) \\
		& =- W_{ij} |\phi_i-\phi_j|^{\alpha+1}.
	\end{align}
	The proof similarly follows for $\mbox{sgn}^{\beta}(\cdot)$. 
\end{proof}


\begin{theorem} \label{thm_converg}
	For feasible initialization over uniformly-connected network $\mc{G}$, dynamics \eqref{eq_solution0} converges to the optimizer of allocation/scheduling problem \eqref{eq_dra}.
\end{theorem}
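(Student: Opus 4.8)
The plan is to prove convergence through a Lyapunov/LaSalle-invariance argument adapted to the nonsmooth, time-varying dynamics. I would take as Lyapunov candidate the optimality gap
$$V(\mb{x}) = F(\mb{x}) - F(\mb{x}^*),$$
where $\mb{x}^*$ is the unique optimizer from Lemma~\ref{lem_equilibria}. By Lemma~\ref{lem_feasible} every trajectory started at a feasible point remains on the invariant hyperplane $\{\mb{1}_n^\top \mb{x} = b\}$, and on this set strict convexity of $F$ together with the uniqueness in Lemma~\ref{lem_equilibria} gives $V(\mb{x}) \geq 0$ with equality precisely at $\mb{x} = \mb{x}^*$; thus $V$ is positive definite relative to $\mb{x}^*$ and (intersected with the feasible hyperplane) has compact sublevel sets, so the candidate is admissible.

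The next step is to evaluate the set-valued Lie derivative $\mc{L}_\mc{H} V$ along the dynamics~\eqref{eq_solution0}. Since $F$ is smooth, $\nabla V = \nabla F =: \phi = (\phi_1,\dots,\phi_n)^\top$ with $\phi_i = \frac{df_i}{dx_i}$, and every element of the set-valued derivative shares the common value
$$\dot V = \phi^\top \dot{\mb{x}} = -\eta \sum_{i} \phi_i \sum_{j \in \mc{N}_i} W_{ij}\bigl(\mbox{sgn}^\alpha(\phi_i-\phi_j) + \mbox{sgn}^\beta(\phi_i-\phi_j)\bigr).$$
Applying Lemma~\ref{lem_sum} to each of the two inner sums (once with exponent $\alpha$, once with $\beta$) collapses this into a manifestly sign-definite form,
$$\dot V = -\frac{\eta}{2}\sum_{i,j} W_{ij}\Bigl(|\phi_i-\phi_j|^{\alpha+1} + |\phi_i-\phi_j|^{\beta+1}\Bigr) \le 0,$$
so $V$ is nonincreasing along solutions.

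Step three is to characterize when this derivative vanishes: $\dot V = 0$ forces $\phi_i = \phi_j$ across every currently active edge ($W_{ij} > 0$). The crux is then to promote this instantaneous, edge-local condition to the global statement $\nabla F \in \mbox{span}(\mb{1}_n)$. For a fixed connected graph this is immediate, but here the network is only uniformly connected, so I would argue as follows: since $V$ is bounded below and nonincreasing it converges, hence $\int_0^\infty (-\dot V)\,dt < \infty$, which forces the edge gradient differences $|\phi_i-\phi_j|$ to decay on the union graph $\cup_{t}^{t+T}\mc{G}(t)$ over each connectivity window; uniform connectivity then propagates equality of the $\phi_i$ across the whole network. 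Equivalently, one may invoke the nonsmooth LaSalle--Krasovskii invariance principle: the solution approaches the largest weakly invariant set inside $\{\dot V = 0\}$, and by Lemma~\ref{lem_tree} the only such set consistent with feasibility and with $\nabla F \in \mbox{span}(\mb{1}_n)$ is the singleton $\{\mb{x}^*\}$, giving $\mb{x}(t)\to\mb{x}^*$.

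I expect the main obstacle to be exactly this last bridge---converting the time-varying, edge-local consensus condition into global gradient consensus under mere uniform connectivity, since the classical LaSalle theorem presupposes an autonomous system. Handling the switching graph rigorously requires either a generalized invariance principle for time-varying differential inclusions or a careful Barbalat-style estimate accumulating the decrease of $V$ over each interval $[t,t+T]$. The non-Lipschitz nature of $\mbox{sgn}^\alpha$ at the origin for $0<\alpha<1$ further means one must work with Filippov solutions and the set-valued Lie derivative rather than a classical derivative, which is why the framework of \cite{cortes2008discontinuous} introduced in the preliminaries is the natural setting.
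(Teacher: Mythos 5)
Your proposal matches the paper's proof essentially step for step: the same Lyapunov candidate $V(\mb{x})=F(\mb{x})-F^*$, the same collapse of its derivative along \eqref{eq_solution0} into the sign-definite double sum via Lemma~\ref{lem_sum}, and the same LaSalle-invariance conclusion using Lemma~\ref{lem_feasible} (feasibility invariance) and Lemma~\ref{lem_tree}/Lemma~\ref{lem_equilibria} to identify the invariant set with the singleton $\{\mb{x}^*\}$. The switching-network subtlety you flag at the end is genuine, but the paper itself does not resolve it with any more rigor than you do---it simply invokes the nonsmooth invariance principle of \cite{cortes2008discontinuous} together with \cite[Theorem~2.1]{cherukuri2015distributed} and remarks that strict decrease does not require all-time connectivity---so your write-up is, if anything, more explicit about where the careful time-varying argument would be needed.
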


\begin{proof}	
	The proof is based on nonsmooth Lyapunov analysis in \cite{cortes2008discontinuous}.  Denote by $F^*=F(x^*)$ the optimal value of the cost function in \eqref{eq_dra}. Define the Lyapunov function as the residual $\overline{F}(\mb{x}(t))=F(\mb{x}(t))-F^*$ at every time $t$ along the solution of dynamics \eqref{eq_solution0}. For this positive Lyapunov function, the unique equilibrium is $\overline{F}(\mb{x}^*)=0$. Following  \cite[Proposition~10]{cortes2008discontinuous},
	the derivative of this Lyapunov function $\overline{F}(\mb{x})$ \ab{satisfies $\partial \overline{F}(\mb{x}(t)) \in \mc{L}_\mc{H} \overline{F}(\mb{x}(t)) \in \mathbb{R}$} \cite[Proposition~10]{cortes2008discontinuous} where $\mc{H}$ refers to the solution dynamics \eqref{eq_solution0}. Then, the generalized gradient of the residual is as follows:	
	\begin{align}\nonumber
		\partial \overline{F}(\mb{x}) = &\nabla \overline{F}(\mb{x}) \dot{\mb{x}} = \sum_{i =1}^n \frac{df_i}{dx_i}\dot{\mb{x}}_i \\
		=  &\sum_{i =1}^n \frac{df_i}{dx_i}\Bigl(- \eta \sum_{j \in \mc{N}_i} W_{ij}\Bigl(\mbox{sgn}^{\alpha}(\frac{df_i}{dx_i}-\frac{df_j}{dx_j}) \nonumber \\ &+  \mbox{sgn}^{\beta}(\frac{df_i}{dx_i}-\frac{df_j}{dx_j})\Bigr)\Bigr). \nonumber
	\end{align}
	Then, using Lemma~\ref{lem_sum},
	\begin{align} \label{eq_Frate}
		\partial \overline{F}(\mb{x}) =  &-\frac{\eta}{2}\Bigl(\sum_{i,j =1}^n  W_{ij}|\frac{df_i}{dx_i}-\frac{df_j}{dx_j}|^{\alpha+1} \nonumber \\ &+ \sum_{i,j =1}^n  W_{ij} |\frac{df_i}{dx_i}-\frac{df_j}{dx_j}|^{\beta+1}\Bigr).
	\end{align}
The above implies that $\partial \overline{F}(\mb{x}) \leq 0$ and the Lyapunov function is non-increasing under dynamics \eqref{eq_solution0}. The invariant accumulation set under the proposed dynamics includes the state values $\partial \overline{F}(\mb{x})  = 0$, i.e.,  $\{\mb{x}^*|\nabla F(\mb{x}^*) \in \mbox{span}\{\mb{1}_n\}\}$. This follows from Lemma~\ref{lem_equilibria}. Following the all-time feasibility of the solution from Lemma~\ref{lem_feasible} the equilibrium also satisfies $\mb{1}_n^\top \mb{x}^*  - b = 0$. This implies that the unique optimizer satisfying $\mb{1}_n^\top \mb{x}^* = b$ and $\nabla F(\mb{x}^*) \in \mbox{span}\{\mb{1}_n\}$ is the equilibrium of dynamics \eqref{eq_solution0} ($\partial \overline{F}(\mb{x}) = 0$) and for other points $\partial \overline{F}(\mb{x}) < 0$. \ab{This $\partial \overline{F}(\mb{x}) < 0$ holds if there exist few links over the network and does not require all-time connectivity.}
 Thus, using LaSalle's invariance principle \cite[Theorem~2.1]{cherukuri2015distributed}, the solution converges to the set $\mc{I}$ contained in $\{\mb{x} | 0 \in \mc{L}_\mc{H} \overline{F}(\mb{x}(t)),\mb{1}_n^\top \mb{x} = b \}$ with $\mc{L}_\mc{H} \overline{F}(\mb{x}(t))$ denoting the Lie derivative of the residual with respect to Eq. \eqref{eq_solution0}. Since $\mc{I} = \{\mb{x}^*\}$, and $ \max \mc{L}_\mc{H} \overline{F}(\mb{x}(t)) < 0$ for $\mb{x} \notin \mc{I}$, dynamics \eqref{eq_solution0} globally asymptotically converges to $\mc{I} = \{\mb{x}^*\}$  \cite[Theorem~1]{cortes2008discontinuous}. This completes the proof.
\end{proof}
Finally, we summarize our solution in the Algorithm~\ref{alg_1}.

\begin{algorithm} 
	\textbf{Input:}  $\mc{N}_i$, $W$, $\eta$, $m_i$, $M_i$, $b$, $f_i(\cdot)$\;
	\textbf{Initialization:} $t=0$, random feasible $x_i(0)$\;
	\While{algorithm running}{
		Agent $i$ receives the gradient $\frac{df_j}{dx_j}$ from agents in $j \in \mc{N}_i$ over the multi-agent network $\mc{G}$\;
		Agent $i$ computes Eq.~\eqref{eq_solution0} (or Eq.~\eqref{eq_sol_d})\;
		Agent $i$ sends its gradient $\frac{df_i}{dx_i}$ to neighboring agents $j$ where $i \in \mc{N}_j$ over $\mc{G}$\;
	}
	\textbf{Return} Final state $x^*_i$ and objective function $f_i(x^*_i)$\;	
	\caption{The Accelerated Allocation Algorithm} 
	\label{alg_1}
\end{algorithm}

It should be noted that the discrete-time dynamics \eqref{eq_sol_d} may result in steady-state residual due to the non-Lipschitz nature of the sign-based function. This residual is closely dependent on step-rate $\eta$ and parameters $\alpha$ and $\beta$. \ab{To be more specific, larger $\eta$, smaller (close to zero) $\alpha$, and larger $\beta$ result in larger steady-state residuals. So there is a trade-off between the optimality gap and convergence rate of the proposed dynamics in discrete-time, i.e., the faster convergent solution may result in a higher residual.} This is better shown by the simulations in the next section. Note that this is not an issue in the continuous-time case.

\section{Simulations} \label{sec_sim}
For simulation, we choose a random \ab{time-varying Erdos-Renyi network of $n=50$ nodes with $p=20\%$ linking probability}. The cost function at agent $i$ is $f_i(x_i) = a_i x_i^2 + b_i x_i$ with randomly chosen $a_i \in (0,0.3]$ and $b_i \in (0,10]$. The box constraints are $m_i=20, M_i=105$ addressed in the objective function via logarithmic penalty term \eqref{eq_log_penalty} with $\rho=1,\sigma=1$. By setting $\sigma=1$ we equally weight the objective function as compared to the penalty term (for box constraint). Also, note that the algorithm works for any $\rho$ and we choose $\rho=1$ as an example here. The states are initialized with random values satisfying the feasibility condition $\mb{1}_n^\top \mb{x}(0) = b = 3000$.
In Fig.~\ref{fig_sim}, the residual under dynamics~\eqref{eq_solution0}  is compared with some primal-based all-time feasible solutions proposed in the literature, namely, linear \cite{cherukuri2015distributed}, accelerated linear \cite{shames2011accelerated}, finite-time \cite{chen2016distributed}, and saturated \cite{ccta,scl} solutions. Note that these primal-based solutions are all-time feasible in contrast to the recent dual-based (e.g., ADMM) solutions \cite{doan2017ccta,cdc_dtac,banjac2019decentralized,jian2019distributed,aybat2019distributed,ding2018primal,jiang2022distributed,shao2022distributed,chang2016proximal,xu2017distributed}. This is the reason behind comparing our work with the mentioned primal-based literature.
For our dynamics we set $\alpha=0.3$, $\beta=1.7$, and $\eta=0.2$. Recall that dual-based solutions are not all-time feasible. As it is clear from the figure the decay rate of the proposed solution is faster than the mentioned works.
The time evolution of the states is also shown in the figure. 
As shown in the figure, the average of states remains unchanged as states evolve over time, implying that the sum of states is constant and the solution is all-time feasible. This implies that at every time-instant the resource-demand feasibility constraint $\mb{1}_n^\top \mb{x}(t) = b = 3000$ holds.

\begin{figure}[]
	\center 
	\includegraphics[width=1.72in]{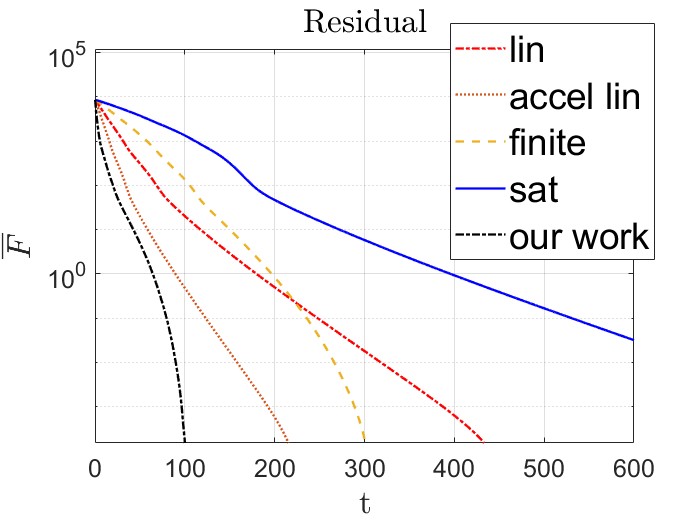}
	\includegraphics[width=1.72in]{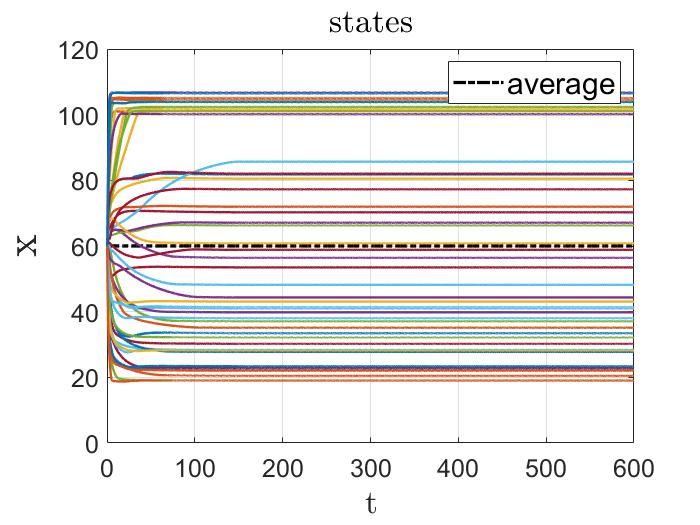}
	\caption{\ab{(Left) Time-evolution of the residual value under the proposed accelerated dynamics~\eqref{eq_solution0} as compared with some existing literature: linear \cite{cherukuri2015distributed}, accelerated linear with $b=0.5$ \cite{shames2011accelerated}, finite-time with $\nu=0.7$ \cite{chen2016distributed}, and saturated with $\delta =1$ \cite{ccta,scl} solutions  (Right) Time-evolution of the state values of all agents under the proposed dynamics. }}
	\label{fig_sim} 
\end{figure}

Next, we simulate the residual evolution for different $\alpha$ and $\beta$ values and fixed $\eta=0.1$ to tune the convergence rate. The cost and network parameters are set the same as in the previous simulation. The network $\mc{G}$ is switched between $6$ Erdos-Renyi networks every $1$ sec while the union of the $6$ networks is connected (implying uniform connectivity). The time evolution of the residuals and states are shown in Fig.~\ref{fig_sim2}-(Left). It is clear that increasing $\beta$ and decreasing $\alpha$ improves the convergence rate. For $\beta=1$ and $\alpha=1$ the algorithm represents the linear case provided for comparison. Finally, Fig.~\ref{fig_sim2}-(Right) represents the evolution of the discrete-time dynamics~\eqref{eq_sol_d} for different values of $\alpha$, $\beta$, and $\eta$. Larger $\eta$, smaller (close to zero) $\alpha$, and larger $\beta$ result in faster convergence despite larger steady-state residuals. This shows the trade-off between the convergence rate and the optimality gap.

\begin{figure}[]
	\center 
	\includegraphics[width=1.72in]{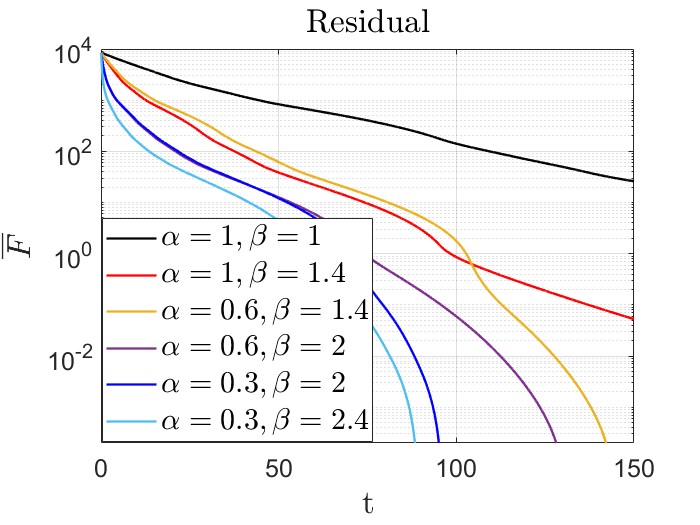}
	\includegraphics[width=1.72in]{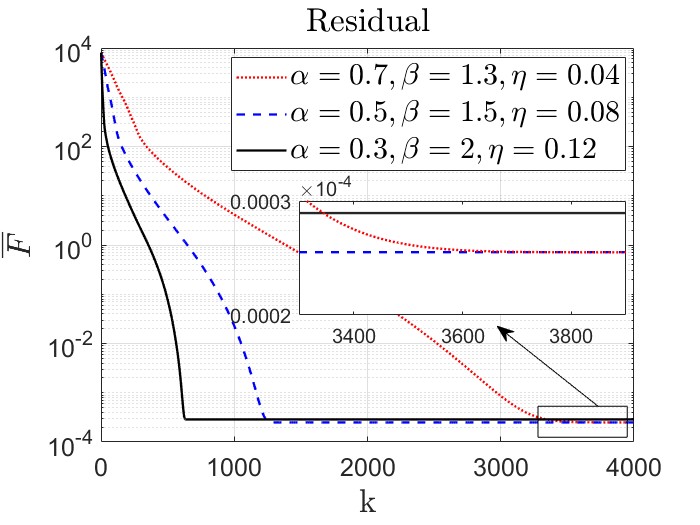}
	\caption{\ab{(Left) The residual decay rate for different $\alpha$ and $\beta$ values under dynamics~\eqref{eq_solution0}. (Right) The residual evolution under discrete-time dynamics~\eqref{eq_sol_d} for different $\alpha$, $\beta$, and $\eta$ values.}}
	\label{fig_sim2} 
\end{figure}

\section{Conclusions} \label{sec_con}
This paper presents fast all-time feasible allocation and scheduling algorithm over uniformly-connected networks, advancing the dual-based solutions in terms of connectivity and feasibility. Also, it advances the primal-based solution in terms of uniform-connectivity and accelerated convergence. 
As a future direction, one can consider other types of nonlinear solutions. For example, it is known that signum-based solutions are robust to noise/disturbances \cite{paganini_book}. The uniform connectivity also allows for convergence over unreliable networks with packet drops. Resource allocation in the presence of malicious agents \cite{wang2022byzantine} is another future direction.
	
\bibliographystyle{IEEEbib}
\bibliography{bibliography}
\end{document}